\theoremstyle{plain}
\definecolor{armygreen}{rgb}{0.29, 0.8, 0.13}
\definecolor{auburn}{rgb}{0.43, 0.21, 0.1}
\definecolor{burgundy}{rgb}{0.5, 0.0, 0.13}
\definecolor{medium red}{rgb}{.490,.298,.337}
\definecolor{dark red}{rgb}{.235,.141,.161}
\let\OLDthebibliography\thebibliography
\renewcommand\thebibliography[1]{
	\OLDthebibliography{#1}
	\setlength{\parskip}{0pt}
	\setlength{\itemsep}{0pt plus 0.1ex}
}
\DeclareFontFamily{U}{mathx}{\hyphenchar\font45}
\DeclareFontShape{U}{mathx}{m}{n}{<-> mathx10}{}
\DeclareSymbolFont{mathx}{U}{mathx}{m}{n}
\DeclareMathAccent{\widebar}{0}{mathx}{"73}
\titleformat{\section}[block]{\normalfont\scshape\large\filcenter}{\thesection}{1em}{}
\titleformat{\subsection}{\normalfont\scshape\large}{\thesubsection}{1em}{}
\titleformat{\subsubsection}{\normalfont\scshape\large}{\thesubsubsection}{1em}{}
\newtheorem{theorem}{Theorem}
\newtheorem*{theorem*}{Theorem}
\newtheorem{claim}{Claim}
\theoremstyle{definition}
\newtheorem{definition}{Definition}
\theoremstyle{remark}
\newcommand*\bigcdot{\mathpalette\bigcdot@{.5}}
\newcommand*\bigcdot@[2]{\mathbin{\vcenter{\hbox{\scalebox{#2}{$\m@th#1\bullet$}}}}}
\newcommand{\objects}{\ensuremath{X}\xspace}
\newcommand{\randoma}{\ensuremath{A}\xspace}
\newcommand{\randomaof}[2]{
    \ifthenelse{\isempty{#1}}
    {
        \ifthenelse{\isempty{#2}}
        {\ensuremath{\randoma_{i\boldsymbol{\bigcdot}}}\xspace}
        {\ensuremath{\randoma_{i#2}}\xspace}
    }
    {
        \ifthenelse{\isempty{#2}}
        {\ensuremath{\randoma_{#1\boldsymbol{\bigcdot}}}\xspace}
        {\ensuremath{\randoma_{#1#2}}\xspace}
    }
}
\newcommand{\rendow}[2]{
    \ifthenelse{\isempty{#1}}
        {
            \ifthenelse{\isempty{#2}}
            {\ensuremath{E_{i\boldsymbol{\bigcdot}}}\xspace}
            {\ensuremath{E_{i#2}}\xspace}
        }
        {
            \ifthenelse{\isempty{#2}}
            {\ensuremath{E_{#1\boldsymbol{\bigcdot}}}\xspace}
            {\ensuremath{E_{#1#2}}\xspace}
        }
}
\newcommand{\dendow}[1]{
    \ifthenelse{\isempty{#1}}
    {\ensuremath{E_{i}}\xspace}
    {\ensuremath{E_{#1}}\xspace}
}
\newcommand{\rrule}{\ensuremath{\varphi}\xspace}
\newcommand{\rruleof}[3]{
    \ifthenelse{\isempty{#1}}
    {
        \ifthenelse{\isempty{#2}}
        {\ensuremath{\rrule_{i\boldsymbol{\bigcdot}}(#3)}\xspace}
        {\ensuremath{\rrule_{i#2}(#3)}\xspace}
    }
    {
        \ifthenelse{\isempty{#2}}
        {\ensuremath{\rrule_{#1\boldsymbol{\bigcdot}}(#3)}\xspace}
        {\ensuremath{\rrule_{#1#2}(#3)}\xspace}
    }
}
\newcommand{\without}[1]{
    \ifthenelse{\isempty{#1}}
    {\ensuremath{P_{-i}}}
    {\ensuremath{P_{-#1}}}
}
\newcommand{\prefof}[3]{
    \ifthenelse{\isempty{#3}}
    {
        \ifthenelse{\isempty{#1}}
            {
                \ifthenelse{\isempty{#2}}
                {\ensuremath{P_i(1)}\xspace}
                {\ensuremath{P_i(#2)}\xspace}
            }
            {
                \ifthenelse{\isempty{#2}}
                {\ensuremath{P_{#1}(1)}\xspace}
                {\ensuremath{P_{#1}(#2)}\xspace}
            }
    }
    {
        \ifthenelse{\isempty{#1}}
            {
                \ifthenelse{\isempty{#2}}
                {\ensuremath{P_i^{#3}(1)}\xspace}
                {\ensuremath{P_i^{#3}(#2)}\xspace}
            }
            {
                \ifthenelse{\isempty{#2}}
                {\ensuremath{P_{#1}^{#3}(1)}\xspace}
                {\ensuremath{P_{#1}^{#3}(#2)}\xspace}
            }
    }
}
\newcommand{\suml}{\sum\limits_}
\newcommand{\incoming}[1]{\ensuremath{\downarrow\!\!(#1)}}
\newcommand{\incomingm}[1]{\ensuremath{\downarrow(#1)}}
\newcommand{\outgoing}[1]{\ensuremath{\uparrow\!\!(#1)}}
\newcommand{\outgoingm}[1]{\ensuremath{\uparrow(#1)}}
\title{\textsc{On Probabilistic Assignment Rules}}
\author[]{Gogulapati Sreedurga\thanks{University of Edinburgh; Email: sgogulap@ed.ac.uk}}
\author[]{Yadati Narahari\thanks{Indian Institute of Science Bangalore; Email: narahari@iisc.ac.in}}
\author[]{Souvik Roy\thanks{Indian Statistical Institute, Kolkata; Email: gametheory.souvik@gmail.com}} 
\author[]{Soumyarup Sadhukhan\thanks{Indian Institute of Technology Kanpur; Email: soumyarups@iitk.ac.in}}
\affil[]{}
\date{\monthyeardate\today}
\begin{document}
	
	\maketitle
	
	\begin{abstract}\singlespacing
    We study the classical assignment problem with initial endowments in a probabilistic framework. In this setting, each agent initially owns an object and has strict preferences over the entire set of objects, and the goal is to reassign objects in a way that satisfies desirable properties such as strategy-proofness, Pareto efficiency, and individual rationality. While the celebrated result by \cite{ma1994strategy} shows that the Top Trading Cycles (TTC) rule is the unique deterministic rule satisfying these properties, similar positive results are scarce in the probabilistic domain. We extend Ma’s result in the probabilistic setting, and as desirable properties, consider SD-efficiency, SD-individual rationality, and a weaker notion of SD-strategy-proofness—SD-top-strategy-proofness—which only requires agents to have no incentive to misreport if doing so increases the probability of receiving their top-ranked object. We show that under deterministic endowments, a probabilistic rule is SD-efficient, SD-individually rational, and SD-top-strategy-proof if and only if it coincides with the TTC rule. Our result highlights a positive possibility in the face of earlier impossibility results for fractional endowments (\cite{athanassoglou2011house}) and provides a first step toward reconciling desirable properties in probabilistic assignments with endowments.
		
		\noindent 
				
		\vspace{4mm}
		
		\noindent JEL Classification: C78, D71, D82 \\
        
\noindent MSC2010 Subject Classification: 91B14, 91B03, 91B68
		
		\vspace{4mm}
		\noindent Keywords: Probabilistic Assignment; SD-strategy-proofness, SD-efficiency, SD-individual rationality; TTC 
		
	\end{abstract}
	
	\newpage

	\maketitle
	
{
	\def\OldComma{,}
	\catcode`\,=13
	\def,{%
		\ifmmode%
		\OldComma\discretionary{}{}{}%
		\else%
		\OldComma%
		\fi%
	}%


\section{Introduction}
\subsection{Problem Description  and Related Literature}

We consider the classical assignment problem with initial endowments in the probabilistic setup. In an assignment problem, a finite set of objects has to be allocated among a finite set of agents (with the same cardinality)  based on their (strict) preferences over the objects. Moreover, the agents have endowments over the objects. Practical applications of this model can be thought of as house allocation among existing tenants, course allocation among the faculty members who have endowments as the courses they taught last time, and many others. A deterministic assignment rule (henceforth, deterministic rule) is a function from the set of reported preferences to the set of assignments, where an assignment is a 1-1 mapping between the set of agents and the set of objects. Some desirable properties of deterministic rules are strategy-proofness, Pareto-efficiency, individual rationality, pair-efficiency, etc.; a deterministic rule is strategy-proof if no agent has an incentive to misreport her preference to get a better outcome. Pareto-efficiency implies there is no way the outcome of the deterministic rule can be shuffled among the agents and make everybody weakly better off, whereas somebody is strictly better off. Finally, a deterministic rule is individually rational if every agent is guaranteed to get a better outcome than her initial endowment according to her preference. 


\cite{ma1994strategy} shows that a deterministic rule satisfies strategy-proofness, Pareto-efficiency, and individual rationality if and only if it is the TTC rule, a deterministic rule first introduced in \cite{shapley1974cores} in the context of the object reallocation problem. \cite{svensson1994queue}, \cite{anno2015short}, and \cite{sethuraman2016alternative} 
provided shorter proofs of this result. Recently, \cite{ekici2024pair} shows that the same result holds even if we replace Pareto-efficiency with a much weaker condition called pair-efficiency. A rule is pair-efficient if two agents cannot exchange their objects assigned by the rule and both become better off. 

In the probabilistic setup, instead of a deterministic rule, a probabilistic assignment rule (henceforth, probabilistic rule)  is considered, which assigns a bi-stochastic matrix at every instance of reported preferences of the agents. Here, the order of a bi-stochastic matrix is the common cardinality of the agent set and the object set, and each row of the matrix denotes the probabilities of assigning the objects to a particular agent, whereas the columns represent the probabilities of assigning a specific object to different agents. In the mechanism design literature, it is well-known that probabilistic rules are better in terms of fairness consideration when compared to their deterministic counterpart. Moreover, considering probabilistic rules broadens the scope of the designer as the class of rules becomes much larger in the case of probabilistic rules. For probabilistic rules, the corresponding desirable properties like strategy-proofness, Pareto-efficiency, individual rationality, and pair-efficiency can be defined using the first-order stochastic dominance. There have been quite a few works in the probabilistic setup of the assignment problem with initial endowments. \cite{athanassoglou2011house} shows that when initial endowments are probabilistic (also known as fractional endowments), SD-strategy-proofness, SD-efficiency, and SD-individual rationality are together incompatible when there are at least four objects. In a later paper, \cite{aziz2015generalizing} strengthens the result by weakening SD-strategy-proofness with weak-SD-strategy-proofness. In both their proofs, they started from some specific fractional endowments and arrived at an impossibility. To the best of our knowledge, there is not much known in the probabilistic setup when the feasible set of initial endowments is restricted.

\subsection{Motivation and Contribution of the Paper}\label{se_1.2}

We extend the main result in \cite{ma1994strategy} in the probabilistic setup with a deterministic endowment. As explained earlier, the probabilistic setup is more appealing than the deterministic one because of the fairness criteria, and it broadens the scope of the designer.  
While Pareto-efficiency and individual rationality have straightforward extensions in the probabilistic setup using first-order stochastic dominance, many extensions are possible of the notion of strategy-proofness (see \cite{sen2011gibbard, aziz2015generalizing, chun2020upper}). The most used one in the literature (\cite{gibbard1977manipulation}) assumes that an agent will manipulate if by misreporting she can be better off for at least one upper contour set. In other words, for strategy-proofness, it requires that truth-telling ensures that for every upper contour set, there is no misreport that can yield a bigger probability than the one obtained by reporting truthfully. No doubt, this is quite a strong condition and may not be appropriate in a practical scenario, as in reality, agents might not care for all upper contour sets. Keeping this point in mind, we consider a weakening of strategy-proofness that we call as SD-top-strategy-proofness.
 Top-strategy-proofness assumes the agents only care about manipulation if they can increase the probability of their top alternative.
The deterministic version of top-strategy-proofness implies agents manipulate only if, by misreporting, they can ensure their top alternative as the outcome. Thus, top-strategy-proofness is the minimal form of strategy-proofness one may think of. In Theorem \ref{the: detend}, we prove that a probabilistic rule is Pareto-efficient, individually rational, and top-strategy-proof if and only if it is the TTC rule. Our result generalizes \cite{ma1994strategy}'s result in the deterministic setup as we work with SD-top-strategy-proofness.

Our result can also be seen as the first step in the direction of evading the negative result of \cite{athanassoglou2011house}. As mentioned earlier, \cite{athanassoglou2011house} show that in the probabilistic setup with fractional endowment, strategy-proofness, Pareto efficiency, and individual rationality are incompatible. To circumvent this negative result, one possible way could be to restrict the set of fractional endowments, i.e., considering only a subset of the set of all bi-stochastic matrices as feasible initial endowments, and see what happens then. In this paper, we take the first step towards this by showing that if we consider the set of permutation matrices for initial endowments, TTC is the only rule that survives, satisfying the three properties.


Our proof uses an inductive argument on the number of agents. To show that at a profile $P_N$, the probabilistic assignment will be the TTC outcome, we construct another profile $P_N'$ by modifying the preferences in $P_N$. In $P_N'$, we use individual rationality and Pareto efficiency to conclude that the outcome will match with TTC. We then use this result to show that the same holds for $P_N$ by changing the preferences of the agents one by one, and using top-strategy-proofness and individual rationality at every step. We believe this technique will be useful to further generalize this result in the presence of more relaxed versions of the three above-mentioned properties.  

\subsection{Other Related Literature}
In the probabilistic assignment problem, one of the first papers is \cite{bogomolnaia2001new}. They introduce a new probabilistic rule called Probabilistic Serial (PS rule) and show that it satisfies some nice properties over the Random Priority rule. They also show an important impossibility result that there is no probabilistic rule satisfying  SD-efficiency, equal treatment of equals and  SD-strategy-proofness. Equal treatment of equals ensures that two agents with the same preference get the same outcome. Later on, \cite{bogomolnaia2012probabilistic} characterize the PS rule in terms of  SD-efficiency, SD-no-envy, and bounded invariance, where SD-envy-free requires that every agent prefers their share over others. Bounded invariance is a weaker notion of SD-strategy-proofness.
\cite{chun2020upper} strengthen the impossibility result of \cite{bogomolnaia2001new} by weakening SD-strategy-proofness to upper-contour strategy-proofness, which only requires that if the upper-contour sets of some objects are the same in two preference relations, then the sum of probabilities assigned to the objects in the two upper-contour sets should be the same. \cite{yilmaz2009random} considers probabilistic assignment problems under weak preferences. Their main contribution is a recursive solution for the weak preference domain that satisfies individual rationality, ordinal efficiency and no justified-envy. No justified-envy views an assignment as unfair if an agent does not prefer his consumption to another agent's consumption, and the assignment obtained by swapping their consumptions respects the individual rationality requirement of the latter agent. Another paper that considers probabilistic allocations with a deterministic initial allocation is \cite{abdulkadirouglu1999house}. In their setting, there are existing tenants as well as new applicants. They introduce a deterministic rule called top trading cycles with fixed priority, which boils down to the TTC rule if there are only existing tenants. They also consider probabilistic allocations, which are a convex combination of these deterministic rules and show that these convex combination rules are strategy-proof, efficient, and individually rational.

We organize the paper in the remaining sections as follows: Section \ref{sec: prelims} introduces the model and the required definitions. Section \ref{sec: ttc} describes the TTC rule in an algorithmic way. Finally, our main result (Theorem \ref{the: detend}) is presented in Section \ref{sec: results}.


\section{Preliminaries}\label{sec: prelims}
Let $N=\{1,\ldots,n\}$ be a finite set of agents. Except otherwise mentioned, $n \geq 2$. Let $\objects = \{x_1,\ldots,x_n\}$ be a finite set of objects. A  reflexive, anti-symmetric, transitive, and complete binary relation (also called a linear order) on the set \objects is called a preference on \objects. We denote by $\mathcal{P}$ the set of all preferences on \objects. For $P \in \mathcal{P}$ and  $x,y \in \objects$, $xPy$ is interpreted as "$x$ is as good as (that is, weakly preferred to) $y$ according to $P$". Since $P$ is complete and antisymmetric, for distinct $x$ and $y$, we have either $xPy$ or $yPx$, and in such cases, $xPy$ implies $x$ is strictly preferred to $y$. For $P \in \mathcal{P}$ and $k \leq n$, by $P(k)$ we refer to  the $k$-th ranked object in \objects according to $P$, i.e., $P(k)=x$ if and only if $|\{y \in \objects\mid yP x \}|=k$. We use $r_i(x)$ to denote the rank of object $x$ in the preference of agent $i$, i.e., $r_i(x) = |\{y \in X \mid yP_ix\}|$. For $P \in \mathcal{P}$ and $x \in \objects$, the \textit{upper contour set} of $x$ at $P$, denoted by $U(x,P)$, is defined as the set of objects that are as good as $x$ in $P$, i.e., $U(x,P)=\{y \in \objects \mid yPx\}$.\footnote{Observe that $x\in U(x,P)$ by reflexivity.} An element  $P_N=(P_1,\ldots,P_n)$ of $\mathcal{P}^n$ is called a preference profile.

\subsection{Probabilistic assignments and their properties}

A probabilistic assignment \randoma is a bi-stochastic matrix of order $n$, i.e., \randoma is an $n \times n$ matrix in which every entry is in between $0$ and $1$ and every row and column sums are  $1$ (i.e., $0 \leq \randomaof{i}{j} \leq 1$, $\suml{j \in [m]}{\randomaof{i}{j}}=1$, and $\suml{i \in [n]}{\randomaof{i}{j}}=1$). We denote by $\mathcal{\randoma}$ the set of all probabilistic assignments. When each entry of a probabilistic assignment \randoma is either zero or one, then it is called a deterministic assignment. Note that a deterministic assignment is a permutation matrix.

The rows of a probabilistic assignment correspond to the agents, and the columns correspond to the objects. For an agent $i$ and object $x$, the value $\randomaof{i}{j}$ denotes the probability with which agent $i$ receives object $x_j$. Using standard matrix notations, we write: $\randomaof{i}{\boldsymbol{\bigcdot}} \in  \Delta(\objects)$ denotes the probability distribution in the $i^{\text{th}}$ row of \randoma.\footnote{Let $\Delta (S)$ denote the set of probability distributions on a set $S$.} Likewise, $\randomaof{\boldsymbol{\bigcdot}}{j} \in \Delta(N)$ denotes the probability distribution in the column corresponding to the object $x_j$. Further, with a slight abuse of the above notations, for any set $S \subseteq \objects$, let $\randomaof{i}{S}$ denote the total probability of the elements in $S$ in the $i^{\text{th}}$ row of \randoma. That is, $\randomaof{i}{S}  = \suml{j \in S}{\randomaof{i}{j}}$. Likewise, for any $S \subseteq N$, we define $\randomaof{S}{j}$ to be the total probability of the elements in $S$ in the $j^{\text{th}}$ column of \randoma.

In this paper, we assume that there is an initial deterministic assignment of the objects to the individuals. Let $E$ (a permutation matrix) denote that initial assignment, also referred to as \textbf{initial endowment}. For notational simplicity, we assume agent $i$ has the object $x_i$ as her initial assignment. This means essentially $E$ is the identity matrix of order $n$.

A \textbf{probabilistic assignment rule} (henceforth, a PR) is a function $\rrule: \mathcal{P}^n \to \mathcal{\randoma}$. The entry $\rruleof{i}{x_j}{P_N}$ represents the probability with which agent $i$ receives object $x_j$ at the profile $P_N$. A PR $\rrule$ is said to be \textbf{deterministic assignment rule} (henceforth, a DR and typically denoted by $f$) if for every $P_N \in \mathcal{P}^n$, $\rrule(P_N)$ is a deterministic assignment.

Given a preference $P_i \in \mathcal{P}$ of the agent $i$, we say that the agent $i$ \textbf{weakly prefers} a probability distribution $\lambda \in \Delta (\objects)$ over a distribution $\lambda' \in \Delta(\objects)$, if for every $x_j \in \objects$, we have $\lambda(U(x,P_i)) \geq \lambda'(U(x,P_i))$. Likewise, agent $i$ \textbf{strictly prefers} $\lambda$ over $\lambda'$ if it weakly prefers $\lambda$ over $\lambda'$ and there exists some $x \in \objects$ such that $\lambda(U(x,P_i)) > \lambda'(U(x,P_i))$. We write $\lambda \succeq_{P_i} \lambda'$ to indicate that $i$ weakly prefers $\lambda$ over $\lambda'$, and $\lambda \succ_{P_i} \lambda'$ to indicate that $i$ strongly prefers $\lambda$ over $\lambda'$.

We now introduce three desirable properties of probabilistic assignment rules.

\begin{definition}(SD-Pareto -efficiency)\label{def: pe} A probabilistic assignment \randoma SD-Pareto-dominates another probabilistic assignment $\randoma'$ at a profile $P_N$ if $$\randoma_{i\boldsymbol{\bigcdot}} \succeq_{P_i} \randoma'_{i\boldsymbol{\bigcdot}} \text{ for all } i\in N \text{ and } \randoma_{j\boldsymbol{\bigcdot}} \succ_{P_j} \randoma'_{j\boldsymbol{\bigcdot}} \text{ for some } j\in N.$$ A probabilistic assignment is said to be SD-Pareto-dominated if it is SD-Pareto-dominated by some other probabilistic assignment.  

A probabilistic assignment rule $\rrule$ is \textbf{SD-efficient} if for all $P_N\in \mathcal{P}^n$, $\rrule(P_N)$ is not SD-dominated at $P_N$. 	
\end{definition}


\begin{definition}(SD-Individual Rationality)\label{def: ir}
A probabilistic assignment rule $\rrule$ is \textbf{SD-individually rational} if for every $P_N \in \mathcal{P}^n$ and every agent $i \in N$, we have $$\rruleof{}{}{P_N} \succeq_{P_i} \rendow{}{} \;\;.$$
\end{definition}

\begin{definition}(SD-Strategy-proofness)\label{def: sp}
A probabilistic assignment rule $\rrule$ is \textbf{SD-strategy-proof} if for every $P_N \in \mathcal{P}^n$, every agent $i \in N$, and every $P_i' \in \mathcal{P}$, $$\rruleof{}{}{P_N} \succeq_{P_i} \rruleof{}{}{P_i',\without{}}.$$
\end{definition}

As mentioned in Section \ref{se_1.2}, below we formally define the weaker notion of SD-strategy-proofness that we work with in the results. 

\begin{definition}[SD-top-strategy-proofness]\label{def: topsp}
A probabilistic assignment rule $\rrule$ is \textbf{top-SD-strategy-proof} if for every $P_N \in \mathcal{P}^n$, every agent $i \in N$, and every $P_i' \in \mathcal{P}$, $$\rruleof{}{\prefof{}{}{}}{P_N} \geq \rruleof{}{\prefof{}{}{}}{P_i',\without{}}.$$
\end{definition}

All the above properties can be defined similarly for the deterministic assignments, and we refer to them as Pareto-efficiency, individual rationality, strategy-proofness, and top-strategy-proofness.  We now study a rule, called Top Trading Cycles, that starts with an initial endowment. Notably, the rule always results in a deterministic assignment.

\section{Top Trading Cycles (TTC) Rule}\label{sec: ttc} 

Now, we define the TTC rule. As said earlier, we assume agent $i$ is endowed with object $x_i$ initially. We use $\objects_S$ to denote the set of endowments of all the agents in $S \subseteq N$. We additionally need the following terminology. For a preference $P$ on \objects and a subset $Y$ of \objects, we write $P|_Y$ to refer to the restriction of $P$ to $Y$, that is, $P|_Y$ is the preference on $Y$  such that for all $x,y \in Y$, $xP|_Yy$ if and only if $xPy$. For a preference profile $P_N$, we write $P_N|_Y$ to refer to the profile in which all the preferences in $P_N$ are restricted to $Y$.


We now introduce a particular type of graph. Let $S \subseteq N$ and $P_i$ be a preference on $X_S$ for an agent $i \in S$. The graph $\mathcal{G}(P_S)$ is defined as follows: the  set of nodes is $S$ and there is an edge $(i,j)$ if and only if $\prefof{}{}{}=x_j$. Clearly, the graph $\mathcal{G}(P_N)$ will have at least one cycle. 

We are now ready to define the TTC rule. Consider a preference profile $P_N$. The TTC rule is a deterministic assignment rule whose step-by-step description at this profile is as follows:
\begin{enumerate}
    \item \label{nodes} Round 1: Let $\objects^1=\objects$,  $N^1=N$, and $P^1_{N^1}=P_N$. Consider the graph  $\mathcal{G}^1(P^1_{N^1})$. 
    Let $C=(i_1,\ldots, i_k,i_1)$ be the smallest cycle in the graph. Assign $\dendow{j+1}$ to agent $j$ for all $j \in \{1,\ldots,k\}$ (where $i_{k+1}=i_1$). 
    
    \item  \label{intialize}  Round 2: Let  ${N}^2$ be the set of remaining agents (that is, the agents who did not belong to any cycle in the first round). Let $\objects^2=X_{{N}^2}$ and  ${P}^2_{{N}^2}$ be the reduced preference profile of the agents in $N_2$ to the set of objects in $\objects^2$, that is, ${P}^2_i=P_i^1|_{\objects^2}$. Now, consider the graph $\mathcal{G}^2(P^2_{N^2})$ and repeat Round 1.

    
    This continues till all the objects are allocated to some agents. Note that this always results in a deterministic assignment.
    
\end{enumerate}

The following theorem is due to \cite{ma1994strategy}. Later on, shorter proofs are found by \cite{svensson1994queue}, \cite{anno2015short}, and \cite{sethuraman2016alternative}.
\begin{theorem*}
Suppose that $E$ is the initial deterministic endowment. Then, TTC is the unique deterministic assignment rule that satisfies Pareto-efficiency, individual rationality, and strategy-proofness. 
\end{theorem*}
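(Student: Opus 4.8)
The plan is to prove the two directions separately. The easy direction is to verify that the TTC rule itself satisfies the three properties. For Pareto-efficiency, I would observe that in each round of TTC, every agent in the selected cycle receives an object that is the best available among the remaining (unassigned) objects, given that the objects assigned in earlier rounds are strictly preferred by their recipients; a standard exchange/cycle argument shows no reassignment can make someone better off without hurting another. For individual rationality, the key point is that an agent is never assigned an object ranked below her own endowment $x_i$, because her endowment remains available until she is removed, so she always points to something at least as good as $x_i$. For strategy-proofness, I would invoke the classical argument that the set of objects available to agent $i$ across rounds is (weakly) unaffected by her own report in a way that benefits her, so she cannot obtain a strictly preferred object by misreporting.

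The harder and more substantive direction is uniqueness: any DR $f$ satisfying the three properties must coincide with TTC at every profile $P_N$. Here I would argue by induction on the round structure of the TTC algorithm, or equivalently on the number of agents. The base step handles the agents in the first-round cycles: I would show that if agent $i$ lies on a TTC cycle $C=(i_1,\dots,i_k,i_1)$ so that each $i_j$ points to her first choice $P_{i_j}(1)=\dendow{i_{j+1}}$ being the endowment of the next agent, then individual rationality forces each such agent to receive an object at least as good as her endowment, and since these agents form a closed trading cycle on their top choices, efficiency forces $f$ to assign each of them exactly her top choice — which is the TTC outcome. The inductive step then removes the matched agents and their objects and applies the induction hypothesis to the reduced profile $P_N|_{\objects^2}$ on $N^2$; the main work is to show that the behavior of $f$ on the reduced problem is inherited, i.e., that the agents in $N^2$ neither receive nor are denied objects outside $\objects^2$.

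I expect the principal obstacle to lie precisely in this inheritance step: establishing that $f$ restricted to the remaining agents behaves like a rule on the smaller problem with endowment $\objects_{N^2}$. The difficulty is that the three axioms are stated for the fixed profile $P_N$ and do not a priori decompose across rounds, so I cannot simply cite the induction hypothesis without justifying that the removed cycle's objects are never assigned to the surviving agents (and vice versa). The standard way around this is to first prove, using efficiency and individual rationality together, that every object $x \in \objects^1 \setminus \objects^2$ is fully assigned within the first-round cycle under $f$, so that $f$ effectively induces a feasible assignment on $N^2 \times \objects^2$; strategy-proofness is then the tool that lets me transport $f$'s behavior under arbitrary reports of the surviving agents into a genuine rule on the reduced domain, so that the induction hypothesis applies. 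Verifying this decomposition carefully — and checking that the reduced rule still satisfies all three axioms relative to the reduced endowment — is the technical heart of the argument.
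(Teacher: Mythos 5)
There is a genuine gap, and it sits exactly where you placed the ``base step,'' not where you expected it (the inheritance step). You claim that for a first-round TTC cycle $C=(i_1,\dots,i_k,i_1)$, individual rationality plus efficiency alone force $f$ to give each cycle member her top choice at the profile $P_N$ itself. This is false: IR and Pareto-efficiency at a single profile do not pin down the cycle trade. Take $n=3$ with endowments $x_1,x_2,x_3$ and preferences $P_1: x_2 \, P_1 \, x_3 \, P_1 \, x_1$, \ $P_2: x_1 \, P_2 \, x_3 \, P_2 \, x_2$, \ $P_3: x_1 \, P_3 \, x_2 \, P_3 \, x_3$. The first-round TTC cycle is $(1,2)$, so TTC gives $(x_2,x_1,x_3)$. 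But the assignment $(x_3,x_1,x_2)$ is also individually rational (each agent weakly improves on her endowment) and Pareto-efficient (agent $2$ holds her top, and the only way to improve agent $1$ is to swap $x_2$ and $x_3$ between agents $1$ and $3$, which hurts agent $3$), yet agent $1$, a member of the first-round cycle, does not receive her top object. So a rule could output $(x_3,x_1,x_2)$ here without violating IR or efficiency at this profile; only strategy-proofness, used \emph{across} profiles, rules it out. Your proposal never invokes strategy-proofness in the base step, so the argument as written fails at its very first stage.

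The standard repair --- and the route the paper itself takes when proving its probabilistic generalization (Theorem~\ref{the: detend}), since for the deterministic statement it simply cites \cite{ma1994strategy} --- is a profile-modification argument. For the cycle $C$, construct $P'_N$ by pushing each cycle member's own endowment $x_i$ up to \emph{second} position in $P_i$, leaving everyone else unchanged. At $P'_N$, IR alone forces, by a counting argument over the $k$ columns of the cycle's objects (the paper's Claim~\ref{cla: ir}), that the endowments of $C$ are allocated entirely within $C$; Pareto-efficiency then forces each member to get her top (Claim~\ref{cla: pe}). Finally, strategy-proofness transfers this conclusion back from $P'_N$ to $P_N$ by switching the cycle members' preferences one at a time: if agent $i_1$ got her top with probability (or indicator) $1$ at $P'_N$, a drop after switching to $P_{i_1}$ would be a profitable misreport, and IR then propagates the full cycle trade around $C$ at each intermediate profile. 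Note that only the ``top'' component of strategy-proofness is used, which is why the paper's weakening to SD-top-strategy-proofness suffices. Your instinct that the round-to-round inheritance needs care is sound and is shared by the paper (it removes the matched agents and applies the induction hypothesis to the reduced problem), but the load-bearing use of strategy-proofness is in the first round, via the $P'_N$ detour, and your proposal is missing precisely that idea.
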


\section{Results}\label{sec: results}
From \cite{ma1994strategy}'s result, we know that the TTC is the unique efficient, individually rational, and strategy-proof deterministic assignment rule when the endowments are deterministic. In this section, we extend this result to probabilistic assignments and prove that even in such a setting, TTC continues to be the only efficient, individually rational, and top-strategy-proof assignment rule starting from deterministic initial endowments.

\begin{theorem}\label{the: detend} Suppose that $E$ is the initial endowment. Then, a PR is  SD-Pareto-efficient, SD-individually rational, and SD-top-strategy-proof if and only if it is the TTC rule. 
\end{theorem}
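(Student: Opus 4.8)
The plan is to prove both directions, with the substantive content lying in the "only if" direction. The "if" direction requires showing that the TTC rule itself (viewed as a degenerate PR assigning the permutation-matrix outcome with probability one) satisfies the three SD-properties. Since TTC outputs a deterministic assignment, its rows are degenerate distributions, and SD-efficiency, SD-individual rationality, and SD-top-strategy-proofness all reduce to their deterministic counterparts; these follow directly from the classical result of \cite{ma1994strategy} quoted above, since deterministic strategy-proofness is stronger than top-strategy-proofness. So the bulk of the work is the forward direction: assuming $\rrule$ is a PR satisfying the three properties, I must show $\rrule(P_N)$ equals the (deterministic) TTC outcome at every profile $P_N$.

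Following the proof strategy sketched in Section \ref{se_1.2}, I would argue by induction on the number of agents $n$. The base case $n=1$ is trivial (the single agent keeps her endowment by individual rationality). For the inductive step at a profile $P_N$, the key structural fact is that TTC proceeds in rounds by extracting cycles in $\mathcal{G}(P_N)$. The plan is first to show that every agent in the first-round cycle $C=(i_1,\ldots,i_k,i_1)$ receives her assigned object $\dendow{j+1}$ with probability one under $\rrule$. Intuitively, each such agent gets an object she weakly prefers to her endowment (by SD-individual rationality applied along the cycle), and SD-efficiency forces the cycle to be realized deterministically: any leakage of probability off the cycle would let the cycle-agents trade among themselves to SD-dominate the outcome. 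Once the first-round cycle agents are pinned to their TTC objects, I would argue that $\rrule$ restricted to the remaining agents $N^2$ on the remaining objects $\objects^2$ again satisfies the three SD-properties, so the inductive hypothesis applies and yields the full TTC outcome.

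The main obstacle, and where the argument genuinely uses top-strategy-proofness rather than full strategy-proofness, is the second half of the induction-reduction: after fixing the first-round assignments, the remaining agents' probabilities must be shown to match TTC on $\objects^2$, but the inductive hypothesis is stated for the original (unrestricted) preference domain. The delicate point is the two-step profile manipulation described in the introduction: to analyze $P_N$, one constructs a modified profile $P_N'$ where individual rationality and SD-efficiency alone force agreement with TTC, and then transforms $P_N'$ back to $P_N$ one agent at a time, invoking SD-top-strategy-proofness together with SD-individual rationality at each step to preserve the conclusion. The crux is showing that changing a single agent's preference cannot move any probability mass in a way that breaks the already-established deterministic structure; this is exactly where top-strategy-proofness—controlling only the probability of the top object—must be leveraged in combination with individual rationality, and verifying that these two weak properties jointly suffice (despite top-strategy-proofness being far weaker than full SD-strategy-proofness) is the technical heart of the proof.

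I would conclude by assembling these steps: the induction establishes that $\rrule(P_N)$ coincides with the TTC outcome at every profile, and together with the easy "if" direction this gives the claimed characterization.
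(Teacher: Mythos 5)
Your plan follows the paper's route exactly (induction on $n$; pin the first-round cycle; reduce to the smaller problem), but as written it contains one step that would fail and one that is left unproven. The failing step is your second-paragraph intuition that SD-individual rationality ``applied along the cycle'' plus SD-efficiency already force the cycle to be executed deterministically at $P_N$ itself. At $P_N$, individual rationality only constrains the mass on the upper contour set of each agent's endowment, which is nearly vacuous when an agent ranks her endowment low; efficiency and individual rationality alone do \emph{not} pin the cycle at $P_N$. Concretely, take $n=3$ with $P_1\colon x_2\,x_3\,x_1$, $P_2\colon x_1\,x_3\,x_2$, $P_3\colon x_1\,x_2\,x_3$: the assignment giving $x_2$ to agent $1$, $x_3$ to agent $2$, and $x_1$ to agent $3$ is SD-efficient and SD-individually rational, yet it disagrees with TTC on the cycle $(1,2)$. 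This is precisely why the paper passes to the modified profile $P'_N$, where each cycle agent's endowment is pushed to second rank, so that individual rationality forces each cycle agent's mass onto $\{P_i(1),E_i\}$ and a column-sum counting argument (Claim~\ref{cla: ir}) closes the cycle, after which efficiency delivers the tops (Claim~\ref{cla: pe}). You do invoke $P'_N$ in your third paragraph, so the plan self-corrects, but the standalone argument of your second paragraph is not a valid substitute.

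The walk-back from $P'_N$ to $P_N$, which you correctly flag as ``the technical heart,'' is exactly where the proof lives, and you never carry it out. The paper's execution is: revert cycle agents to their true preferences one at a time; since $P_i(1)=P'_i(1)$, SD-top-strategy-proofness applied at the new profile against the deviation back to $P'_i$ (a profile where that agent provably receives her top with probability $1$) pins her top probability at $1$; individual rationality then propagates deterministically around the cycle, because every still-modified agent has her endowment second-ranked and that endowment is already exhausted by her predecessor. Two points you would have had to settle: (i) with two or more reverted agents, the relevant unilateral deviations land on profiles with one fewer reverted agent but a \emph{different} reverted set (e.g., $i_1$'s deviation at $P^2_N$ does not yield $P^1_N$), so the one-agent conclusion must be established for an arbitrary single reverted agent and the induction run over all reverted subsets of a given size; individual-rationality propagation alone cannot pin a reverted agent's row, since under her original preference the upper contour set of her endowment may contain non-cycle objects; (ii) your worry about applying the inductive hypothesis to the reduced problem on $X_{N^2}$ is legitimate, but you leave it, like the walk-back, as an unverified crux. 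In short: right skeleton, but the proposal stops short of a proof precisely at the claims that constitute it.
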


\begin{proof}
The if part of the result follows from \cite{ma1994strategy} as the TTC rule satisfies the three properties in the deterministic setup. We proceed to show the only-if part by induction on $n$. First, consider $n=2$. If the top-ranked objects of each of the two agents is different, both of them get their top-ranked objects and the outcome is as that of TTC rule. If the top-ranked objects of both the agents is same i.e., $\dendow{1}$, by individual rationality, agent $1$ is assigned whole of $\dendow{1}$ and hence agent 2 is assigned whole of $\dendow{2}$. This again is the outcome of TTC rule.

Induction Hypothesis (IH): When there are at most $n-1$ agents and objects, an assignment rule $\rrule$ is Pareto-efficient, individually rational, and strategy-proof, then it is TTC rule.

We use IH to prove that the same statement holds for $n$ agents and objects. Let $P_N$ be the preference profile of the agents. Construct $\mathcal{G}(P_N)$ such that an edge $(i,j)$ is present if and only if $\prefof{}{}{} = \dendow{j}$. For all the singleton cycles $(i,i)$ in the graph, by individual rationality, $\rruleof{}{\dendow{i}}{P_N} = 1$ and the assignment is same as that in TTC rule. Now, let $\mathcal{C}=(i_1,\ldots,i_k,i_1)$ be a non-singleton cycle in $\mathcal{G}(P_N)$ (We use $i_{k+1}$ to denote $i_1$ and $i_0$ to denote $i_k$). Construct a new profile $P'_N$  by modifying the preferences of agents as follows: for every $i \in \mathcal{C}$, push $\dendow{i}$ to the second rank in the preference of $i$. That is, $\prefof{}{}{}P'_i\dendow{i}P'_i\ldots$ is the preference of $i$ in $P'_N$. For every $i \notin \mathcal{C}$, $P'_i = P_i$. 

It is to be noted that $\mathcal{G}(P_N) = \mathcal{G}(P'_N)$ and hence $\mathcal{C}$ remains to be a non-singleton cycle at both the profiles. 
We first prove the following claim which states that at $P'_N$, for the individual rationality to be satisfied by a probabilistic assignment rule, all the objects in $\mathcal{C}$ must be assigned only to the agents in $\mathcal{C}$.
\begin{claim}\label{cla: ir}
For any individually rational probabilistic assignment rule $\rrule$ and every $\dendow{j} \in \objects$ such that $j \in \mathcal{C}$, we have,
$$\suml{i \in \mathcal{C}}{\rruleof{i}{\dendow{j}}{P'_N}} = 1.$$
\end{claim}
\begin{proof}
This is satisfied trivially if $\mathcal{C}$ is empty. Hence, without loss of generality, assume that $|\mathcal{C}|>1$. 
Since $\mathcal{C}$ is forming a cycle, we have
\begin{align}
    \label{eq: setsincycle}
    \{\prefof{}{}{} \mid  i \in \mathcal{C}\} = \{\dendow{i} \mid i \in \mathcal{C}\}.
\end{align}
Since $\rrule$ is individually rational, for every $i \in \mathcal{C}$ we have,
\begin{align}
    \label{eq: equals1}
    \rruleof{i}{\{\prefof{}{}{},\dendow{i}\}}{P'_N} = 1
\end{align}
From \Cref{eq: equals1} we have,
\begin{align}
    \label{eq: equalsz}
    \suml{i \in \mathcal{C}}{(\rrule_{i\prefof{}{}{}}+\rrule_{i\dendow{i}})} = k
\end{align}
By applying \Cref{eq: setsincycle} here, we get,
\begin{align}
    \label{eq: equalsz2}
    \suml{\dendow{j} : i_j \in \mathcal{C}} {(\rruleof{i_{j-1}}{\dendow{j}}{P'_N}+\rruleof{i_j}{\dendow{j}}{P'_N})} = k
\end{align}
Since $\rruleof{\boldsymbol{\bigcdot}}{\dendow{j}}{P'_N}$ is a probability distribution, each of the terms on the left side must be at most $1$. That is, for any $\dendow{j}$ such that $i_j \in \mathcal{C}$, we have
\begin{align}
    \label{eq: atmost1}
    \rruleof{i_{j-1}}{\dendow{j}}{P'_N}+\rruleof{i_j}{\dendow{j}}{P'_N} \leq 1
\end{align}
This, when combined with \Cref{eq: equalsz2} and the fact that $|\mathcal{C}| = k$, gives
\begin{align}
    \label{eq: eachexact1}
     \rruleof{i_{j-1}}{\dendow{j}}{P'_N}+\rruleof{i_j}{\dendow{j}}{P'_N} = 1
\end{align}
Hence, the claim follows.
\end{proof}
We now impose Pareto-efficiency on the individually rational probabilistic assignment rule $\rrule$. Clearly, from \Cref{cla: ir}, endowments of all the agents in $\mathcal{C}$ are re-assigned to the agents in $\mathcal{C}$. As we can see, the only such probabilistic assignment that is not Pareto-dominated is to assign $\prefof{}{}{}$ to each agent $i \in \mathcal{C}$. Any other individually rational probabilistic assignment will be Pareto-dominated by this assignment. This is because, for each agent $i \in \mathcal{C}$, $\prefof{}{}{}$ is the most preferred object of agent $i$ in $\mathcal{G}(P'_N)$. Clearly, this assignment is the same as that of TTC rule. Thus, the next claim follows.
\begin{claim}\label{cla: pe}
Any Pareto-efficient and individually rational probabilistic assignment rule $\rrule$ gives only TTC assignment for all the agents in the cycle $\mathcal{C}$ at the profile $P'_N$.
\end{claim}
We now impose top-strategy-proofness on the Pareto-efficient and individually rational probabilistic assignment rule $\rrule$. We modify $P'_N$ to get $P^1_N$ as follows: change the preference of exactly one agent $i_1$ in $\mathcal{C}$ to $P_{i_1}$ and let the preferences of all the other agents be same as that in $P'_N$. We know that, $\rruleof{i_1}{\prefof{i_1}{}{}}{P'_N} = 1$. By top-strategy-proofness, $\rruleof{i_1}{\prefof{i_1}{}{}}{P^1_N} = 1$ (else, $i_1$ could have misreported its preference to be $P'_{i_1}$ instead of $P_{i_1}$ and get a better probability for $\prefof{i_1}{}{}$). From individual rationality, the agent endowed with the object $\prefof{i_1}{}{}$ should now get the object it has an outgoing edge to and so on. Thus, all the agents in $\mathcal{C}$ also continue to get the same objects as at profile $P'_N$ (that is, they get their top-ranked objects).

Now, we modify $P'_N$ to get $P^2_N$ as follows: change the preference of exactly two agents $i_1,i_2$ in $\mathcal{C}$ to $P_{i_1}$ and $P_{i_2}$ and let the preferences of all the other agents be same as that in $P'_N$. We know that, $\rruleof{i_1}{\prefof{i_1}{}{}}{P^1_N} = \rruleof{i_2}{\prefof{i_2}{}{}}{P^1_N} = 1$. By top-strategy-proofness, $\rruleof{i_1}{\prefof{i_1}{}{}}{P^2_N} = \rruleof{i_2}{\prefof{i_2}{}{}}{P^2_N} = 1$ (else, $i_1$ and $i_2$ could have unilaterally misreported their preferences to be $P'_{i_1}$ and $P'_{i_2}$ instead of $P_{i_1}$ and $P_{i_2}$ respectively and get better probabilities for their top-ranked objects). Again, by individual rationality, all the other agents in $\mathcal{C}$ also continue to get the same objects as at profile $P'_N$ (that is, they get their top-ranked objects).

Repeat this process $z$ times. Clearly, $P^z_N = P_N$. This implies that the assignments for all the agents in $\mathcal{C}$ at $P_N$ and $P'_N$ are same as that of TTC rule.

Let $\mathcal{S}$ be the set of agents forming singleton cycle in $\mathcal{G}(P_N)$. Now consider another instance of object assignment after removing $\mathcal{S} \cup \mathcal{C}$ from the set of agents and their endowments from the set of objects. From IH, the outcome in the resultant instance is same as that of TTC rule. Hence, the rule has to be TTC rule.
\end{proof}

}

	\bibliographystyle{plainnat}
	\setcitestyle{numbers}
	\bibliography{references.bib}

\appendix
%
%
%
%
%
\end{document}